
\documentclass[journal]{IEEEtran}
\usepackage{cite}
\usepackage{graphicx}
\usepackage[tight,footnotesize]{subfigure}
\usepackage{epstopdf}
\usepackage[cmex10]{amsmath}
\interdisplaylinepenalty=2500
\usepackage{url}
\usepackage{amsmath}
\usepackage{amsfonts}
\usepackage{amssymb}
\usepackage{bbm}
\usepackage{lipsum}

\newtheorem{definition}{Definition}
\newtheorem{property}{Property}
\newtheorem{lemma}{Lemma}
\newtheorem{theorem}{Theorem}
\newtheorem{corollary}{Corollary}

\usepackage[colorinlistoftodos]{todonotes}
\usepackage{lipsum}
\newcounter{opteq}

\setcounter{topnumber}{4}
\setcounter{bottomnumber}{4}
\setcounter{totalnumber}{5}

\begin{document}
%
\title{Sparse Active Rectangular Array with Few Closely Spaced Elements}

\author{Robin~Rajam\"{a}ki,~\IEEEmembership{Student Member,~IEEE,}
	and~Visa~Koivunen,~\IEEEmembership{Fellow,~IEEE}
	\thanks{The authors are with the Department of Signal Processing and Acoustics, Aalto University School of Electrical Engineering, Espoo 02150, Finland (e-mail: robin.rajamaki@aalto.fi, visa.koivunen@aalto.fi).}}


\maketitle

\begin{abstract}
Sparse sensor arrays offer a cost effective alternative to uniform arrays. By utilizing the \emph{co-array}, a sparse array can match the performance of a filled array, despite having significantly fewer sensors. However, even sparse arrays can have many closely spaced elements, which may deteriorate the array performance in the presence of \emph{mutual coupling}. This paper proposes a novel sparse planar array configuration with few unit inter-element spacings. This \emph{Concentric Rectangular Array}~(CRA) is designed for active sensing tasks, such as microwave or~ultra-sound imaging, in which the same elements are used for both transmission and reception. The properties of the CRA are~compared to two well-known sparse geometries: the Boundary Array and the Minimum-Redundancy Array (MRA). Numerical searches reveal that the CRA is the MRA with the fewest unit element displacements for certain array dimensions.
\end{abstract}

\begin{IEEEkeywords}
	Sparse array,~sum~co-array,~mutual~coupling
\end{IEEEkeywords}

\IEEEpeerreviewmaketitle

\section{Introduction}
The number of antenna elements and RF front ends are critical cost factors in phased sensor arrays. Especially uniform planar arrays rapidly become expensive with increasing array dimensions. Fortunately, \emph{sparse} arrays exploiting the \emph{co-array} \cite{haubrich1968array} may to a certain extent match the performance of uniform arrays, using only a fraction of the number of elements. The co-array is a virtual structure arising from the sums or differences of physical element pairs. It determines, for example, the number of sources an array can resolve \cite{hoctor1992highresolution,pal2010nested,koochakzadeh2016cramerrao,liu2017cramerrao,wang2017coarrays}, or the achievable \emph{point spread function} (PSF) in array imaging \cite{hoctor1990theunifying}. Many applications, such as radar and medical ultrasound, can take advantage of the co-array in order to achieve a desired PSF using fewer sensors. Another benefit of sparse arrays is that they have fewer closely spaced elements. This may reduce \emph{mutual coupling} \cite{boudaher2016mutualcoupling, liu2016supernested,liu2017hourglass}, whose magnitude is inversely proportional to the inter-element distance \cite{gupta1983effect,friedlander1991direction}. 

The objective of sparse array design is often to find the array configuration that generates a desired co-array using as few elements as possible. Unfortunately, the resulting optimization problem is combinatorial and difficult to solve exactly, even for modest array sizes. Hence, several authors have investigated closed-form, but possibly sub-optimal configurations, such as the Wichmann \cite{wichmann1963anote,pearson1990analgorithm,linebarger1993difference}, Nested \cite{pal2010nested}, or Co-prime array \cite{vaidyanathan2011sparsesamplers}. These arrays have primarily been developed for passive sensing of incoherent sources, although some of them are easily adapted to active array processing tasks, such as coherent imaging with co-located transceivers \cite{rajamaki2017sparselinear,rajamaki2018symmetric}. Note that optimal active sparse linear arrays with distinct transmitting and receiving elements are straightforward to generate using interpolation \cite{lockwood1996optimizing1d}. In the case of planar arrays, placing elements on a convex boundary is equivalent to filling the interior of the array with virtual elements \cite{kozick1991linearimaging}. This \emph{Boundary Array} (BA) \cite{hoctor1990theunifying} has been shown to be optimal with respect to the number of elements in some cases \cite{kohonen2017planaradditive}. However, many of the elements in the BA are closely spaced, which may be problematic in the face of non-negligible mutual coupling. Consequently, a desirable design goal could be to~increase~physical element displacements without changing the co-array support \cite{liu2016supernested,liu2017hourglass}.




This paper proposes a novel planar sparse array configuration for active sensing called the \emph{Concentric Rectangular Array} (CRA). The CRA has the same number of sensors as the BA, but the number of elements separated by the smallest distance (typically half a wavelength) is practically constant, whereas it grows linearly with aperture for the BA. We proove that the CRA has both a contiguous difference and sum co-array,~which allows the array to achieve comparable performance with a filled array of equivalent aperture in both passive and active sensing. More generally, we show that any mirror symmetric array, such as the CRA, has an equivalent difference and sum co-array. The CRA is also verified to be minimally redundant for certain square arrays. For non-square rectangular arrays, the CRA trades off redundancy for fewer closely spaced elements.

The paper is organized as follows: section~\ref{sec:def} introduces the signal model and definitions. Section~\ref{sec:cra} presents the CRA and establishes its key properties. A numerical imaging example is provided in section~\ref{sec:example}, and section~\ref{sec:conclusions} concludes the paper.


An interval of integers with step size $ m $ is denoted $ \{a\!:\!m\!:\!b\}\!=\!\{a,a+m,\dots,b\} $, where brackets denote a set. Shorthand $ \{a\!:\!b\} $ is used when $ m\!=\!1 $. The Cartesian product of one-dimensional sets $\mathcal{A}$ and $\mathcal{B}$ is $ \mathcal{A}\times\mathcal{B} = \{[a\ b]^\text{T}\ |\ a\!\in\!\mathcal{A}; b\!\in\!\mathcal{B}\}$.

\section{Signal model and definitions}\label{sec:def}

Consider an active planar array with $ N $ co-located, single-mode \cite{su2001onmodeling} transmitting (Tx) and receiving (Rx) elements illuminating $ K $ far field point targets with narrowband radiation. The transmitters are operated sequentially within the coherence time of the scene, allowing a noise-free snapshot to be modeled by the \emph{natural data matrix} \cite{hoctor1992highresolution} $ \mathbf{X}\in\mathbb{C}^{N\times N} $, whose rows correspond to the Rx and columns to the Tx elements:
\begin{equation}
\mathbf{X} = \mathbf{M}_\text{r}\mathbf{A}_\text{r}\boldsymbol{\Gamma}\mathbf{A}_\text{t}^\text{T}\mathbf{M}_\text{t}^\text{T} s. \label{eq:x}
\end{equation}
In \eqref{eq:x}, $ \boldsymbol{\Gamma}=\text{diag}([\gamma_1\ \dots\ \gamma_K])$ is a diagonal matrix containing the target reflectivities, and $ s\!\in\!\mathbb{C} $ is the transmitted waveform. Furthermore, $\mathbf{M}_\text{r},\mathbf{M}_\text{t}\!\in\!\mathbb{C}^{N\times N} $ are the Rx and Tx mutual coupling matrices, and $ \mathbf{A}_\text{r},\mathbf{A}_\text{t}\!\in\!\mathbb{C}^{N\times K} $ the respective steering matrices. If the elements are reciprocal, then $\mathbf{M}_\text{r}\!=\!\mathbf{M}_\text{t}\!=\!\mathbf{M}$ and $\mathbf{A}_\text{r}\!=\!\mathbf{A}_\text{t}\!=\!\mathbf{A}$. In case of negligible mutual coupling ($ \mathbf{M}\!\approx\!\mathbf{I}$) and identical omnidirectional elements ($ A_{nk}\!=\!e^{j2\pi \mathbf{v}_k^\text{T}\mathbf{d}_n/\lambda} $), the $ (n,m)^\text{th} $ element of $ \mathbf{X} $ in \eqref{eq:x} assumes the form: $ X_{nm}\!=\!s\sum_{{k}\!=\!1}^{K}\gamma_{k} e^{j2\pi\mathbf{v}_{k}^\mathrm{T}(\mathbf{d}_{n}+\mathbf{d}_{m})/\lambda}$. Here $ \lambda $ is the signal wavelength, and $ \mathbf{v}_k^\text{T}\!=\![\sin(\varphi_k)\sin(\theta_k)\ \cos(\varphi_k)\sin(\theta_k)]^\text{T} $ the direction of the $ {k}^\text{th} $ target with azimuth $ \varphi_k\!\in\![-\pi,\pi] $ and elevation $ \theta_k\!\in\![0,2\pi] $. Element positions are given by $\{\mathbf{d}_n\in\mathbb{R}^2\}_{n\!=\!1}^N $, and term $ \mathbf{d}_{n}+\mathbf{d}_{m} $ represents an element of the \emph{sum co-array}.

\subsection{Co-array}
The \emph{sum co-array} is a virtual array formed by pairwise vector sums of Tx and Rx elements: $\mathcal{C}_\Sigma\!=\!\mathcal{D}+\mathcal{D}\!=\!\{\mathbf{d}_\Sigma\!=\!\mathbf{d}_{m}+\mathbf{d}_{n}\ |\  \mathbf{d}_{n},\mathbf{d}_{m}\!\in\!\mathcal{D}\}$. Similarly, subtraction give rise to the \emph{difference co-array}: $ \mathcal{C}_\Delta\!=\!\mathcal{D}-\mathcal{D}$. Dedicated array processing algorithms, such as \emph{image addition} \cite{hoctor1990theunifying} or \emph{spatial smoothing MUSIC} \cite{pal2010nested}, are required to fully utilize the co-array. The co-array is also characterized by the multiplicity of each element. In case of the difference co-array, the \emph{multiplicity function} is $ \upsilon_\Delta(\mathbf{d}_\Delta)\!=\!\sum_{\mathbf{d}_{n},\mathbf{d}_{m}\!\in\!\mathcal{D}} \mathbbm{1}(\mathbf{d}_\Delta\!=\!\mathbf{d}_{m}-\mathbf{d}_{n})$, where $ \mathbbm{1}(\cdot) $ is the indicator function. For simplicity, physical elements are usually assumed to lie on a uniform grid, which after normalization by the unit inter-element spacing simplifies to set of integer-valued vectors contained within an $ L_x\times L_y $ rectangle, i.e., $\mathcal{D}\!=\!\{[d_x\ d_y]^\text{T}\in\mathbb{N}^2\ |\ 0\leq d_x\leq L_x;0\!\leq\!d_y\!\leq\!L_y\} $. A unit on this grid typically corresponds to a physical distance of $ \lambda/2 $. Since $ \mathcal{D}\!\subseteq\!\{0\!:\!L_x\}\times \{0\!:\!L_y\} $ it follows that $\mathcal{C}_\Sigma\!\subseteq\!\{0\!:\!2L_x\}\!\times\!\{0\!:\!2L_y\}$, and $ \mathcal{C}_\Delta\!\subseteq\!\{-L_x,L_x\}\!\times\!\{-L_y,L_y\} $. The sum and difference co-array are \emph{contiguous} when $\mathcal{C}_\Sigma\!=\!\{0\!:\!2L_x\}\!\times\!\{0\!:\!2L_y\}$ and $ \mathcal{C}_\Delta\!=\!\{-L_x,L_x\}\!\times\!\{-L_y,L_y\} $. A contiguous co-array maximizes the number of virtual elements for a given aperture. 

Next, it is shown that a symmetric array with a contiguous sum co-array implies a contiguous difference co-array, and vice versa. In fact, the multiplicity functions of the difference and sum co-array are equal up to a shift of the support:
\begin{lemma}[Co-array of symmetric array]\label{lemma:symmetry}
If $ \mathcal{D} $ is mirror symmetric, then $ \upsilon_\Sigma(\mathcal{D}+\mathcal{D})=\upsilon_\Delta(\mathcal{D}-\mathcal{D})$. 
\end{lemma}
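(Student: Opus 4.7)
\medskip

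The plan is to exploit mirror symmetry by constructing an explicit bijection between the pairs of elements generating each multiplicity. First I would fix notation: call $\mathcal{D}$ mirror symmetric when there exists a center $\mathbf{c}\in\mathbb{R}^2$ such that $\mathcal{D} = 2\mathbf{c}-\mathcal{D}$, i.e., the involution $f(\mathbf{d})\triangleq 2\mathbf{c}-\mathbf{d}$ maps $\mathcal{D}$ onto itself. The assertion of the lemma should then be read as the stronger statement $\upsilon_\Sigma(\mathbf{z}) = \upsilon_\Delta(\mathbf{z}-2\mathbf{c})$ for every $\mathbf{z}\in\mathbb{R}^2$, which also yields equality of the supports after the shift $2\mathbf{c}$.

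Next I would unpack the multiplicity functions as set cardinalities,
\begin{equation*}
\upsilon_\Sigma(\mathbf{z}) = \bigl|\{(\mathbf{d}_n,\mathbf{d}_m)\in\mathcal{D}^2\ |\ \mathbf{d}_n+\mathbf{d}_m=\mathbf{z}\}\bigr|,
\end{equation*}
and likewise for $\upsilon_\Delta$, and apply the change of variables $\mathbf{d}_m' = f(\mathbf{d}_m) = 2\mathbf{c}-\mathbf{d}_m$. Because $f$ is an involution of $\mathcal{D}$, the assignment $(\mathbf{d}_n,\mathbf{d}_m)\mapsto(\mathbf{d}_n,\mathbf{d}_m')$ is a bijection on $\mathcal{D}^2$. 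Substituting the defining relation, $\mathbf{d}_n+\mathbf{d}_m=\mathbf{z}$ transforms into $\mathbf{d}_n-\mathbf{d}_m' = \mathbf{z}-2\mathbf{c}$, and the two indexing sets are therefore in one-to-one correspondence, giving $\upsilon_\Sigma(\mathbf{z})=\upsilon_\Delta(\mathbf{z}-2\mathbf{c})$. In particular, contiguity of one co-array on its bounding box is equivalent to contiguity of the other, which is exactly the corollary that will be used in the sequel for the CRA.

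The main (minor) obstacle is purely bookkeeping: clarifying what is meant by the equality $\upsilon_\Sigma(\mathcal{D}+\mathcal{D})=\upsilon_\Delta(\mathcal{D}-\mathcal{D})$, since the two multiplicity functions live on different supports. This is handled by the shift $2\mathbf{c}$ made explicit through $f$. Aside from this, the proof is a one-line involution argument and requires neither the rectangular-grid assumption nor any properties of the CRA; indeed, the argument works for any subset $\mathcal{D}$ of an abelian group that is symmetric about some point.
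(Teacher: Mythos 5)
Your proof is correct. It rests on the same key idea as the paper's — reflecting one argument of the pair through the symmetry center turns sums into differences — but you package it differently. The paper reduces to one dimension, encodes $\mathcal{D}$ as a binary sequence $b[n]$, and invokes the standard fact that the convolution of a real symmetric sequence equals its autocorrelation up to a shift of $L$, leaving the extension to planar arrays as "straightforward to generalize." You instead argue directly with the involution $f(\mathbf{d})=2\mathbf{c}-\mathbf{d}$ and the induced bijection on $\mathcal{D}^2$, which (i) works in any dimension (indeed in any abelian group) without a reduction step, (ii) handles an arbitrary symmetry center rather than the implicit center $L/2$ of the grid $\{0\!:\!L\}$, and (iii) makes explicit the shift by $2\mathbf{c}$ that is needed to make sense of the lemma's statement $\upsilon_\Sigma(\mathcal{D}+\mathcal{D})=\upsilon_\Delta(\mathcal{D}-\mathcal{D})$, a point the paper glosses over with $\upsilon_\Delta(\mathcal{C}_\Sigma - L)=\upsilon_\Sigma(\mathcal{C}_\Sigma)$. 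What the paper's phrasing buys in exchange is the immediate connection to familiar signal-processing identities (co-array multiplicities as convolution/autocorrelation of the aperture indicator), which fits the surrounding exposition. Both arguments use the ordered-pair counting convention consistent with the paper's definition of $\upsilon_\Delta$, so there is no discrepancy there.
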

\begin{proof}
This follows from the equivalence of the convolution and autocorrelation of a real symmetric function. For simplicity, consider the one-dimensional case, which is then straightforward to generalize to higher dimensions. Let $ b[n]\!=\!\mathbbm{1}(n\!=\!d)$, $ n\in\mathbb{Z}$ be a binary sequence indicating the element positions $ d\!\in\!\mathcal{D}$. The multiplicity functions are given by the convolution $ \upsilon_\Sigma[n]\!=\!(b\ast b)[n]\!=\!\sum_{m\!=\!0}^L b[m]b[n\!-\!m]$, and the autocorrelation $ \upsilon_\Delta[n]\!=\!(b\star b)[n]\!=\!\sum_{m\!=\!0}^L b^\ast [m]b[n\!+\!m]$. Since $ b $ is real and symmetric; $ \upsilon_\Sigma[n]\!=\!0,\ n\!\notin\!\{0\!:\!2L\}$; and $ \upsilon_\Delta[n]\!=\!0,\ n\!\notin\!\{-L\!:\!L\}$, it follows that $ \upsilon_\Delta[n-L]\!=\!\sum_{m\!=\!0}^L b[m]b[n\!+\!m\!-\!L]\!=\!\sum_{m\!=\!0}^L b[m]b[n\!-\!m]\!=\!\upsilon_\Sigma[n]$, i.e., $ \upsilon_\Delta(\mathcal{C}_\Delta)\!=\!\upsilon_\Delta(\mathcal{C}_\Sigma-L)\!=\!\upsilon_\Sigma(\mathcal{C}_\Sigma)$.
\end{proof}

The ideal assumptions of no mutual coupling, far field~targets, and narrowband signals are key to illustrating the emergence of the co-array. Obviously these assumptions hold only approximately in practice. Moreover, practical arrays are subject to other non-idealities, such as imprecise knowledge of elements' phase centers, which introduce perturbations into the array manifold and co-array \cite{wang2018performance}. The co-array is nevertheless a useful concept that can be exploited in real-world array processing tasks \cite{kozick1993synthetic,ahmad2001coarray,ahmad2004designandimplementation,coviello2012thinfilm}.

\subsection{Employed figures of merit for arrays}\label{sec:fom}
\makeatletter
\renewcommand{\@IEEEsectpunct}{\ \,}
\makeatother
\subsubsection{Redundancy,}\label{sec:R}
$ R $, quantifies the degree of element repetition in the co-array. A non-redundant array achieves $ R\!=\!1 $. Typically, $R \!>\!1 $ for an array with a contiguous co-array. In case of the sum co-array, redundancy is defined $R\!=\!{N(N+1)}/{(2|\mathcal{C}_\Sigma|)}$ \cite{hoctor1996arrayredundancy}. Furthermore, the asymptotic redundancy $R_\infty = \lim_{N\to\infty}R $ is often used to compare sparse array configurations. For example, any rectangular array with a contiguous sum co-array must satsify $ R_\infty \geq 1.19 $ \cite{yu2009upper}.
\subsubsection{Sparseness,}\label{sec:sparseness}
$ S $, counts the number of element pairs separated by $ d >0$, i.e., $S(d)\!=\!\frac{1}{2} \sum_{\mathbf{d}_\Delta\in\mathcal{C}_\Delta} \upsilon_\Delta(\mathbf{d}_\Delta)\!\cdot\!\mathbbm{1}(\|\mathbf{d}_\Delta\|_2\!=\!d)$. When $ \mathcal{C}_\Delta\subseteq \mathbb{Z}^2 $, then $d\!=\!1,\sqrt{2},2,\sqrt{5},\sqrt{8},3,\dots $, and $ S(1) $ is the number of \emph{unit spacings}. For linear arrays $ S(d)\!=\!\upsilon_\Delta(d) $.


\subsection{Array configurations}
Next, three well-known co-array equivalent array configurations with contiguous sum co-arrays are reviewed.	 First however, a useful and easily verifiable property of any planar array with a contiguous sum co-array is stated. Namely, the corners of the array must contain at least three elements each:
\begin{property}[Corners]
	$ \mathcal{D} \supseteq \{[x\ y]^\text{T}\ |\ x\in\{1,L_x-1\};y\in \{0,L_y\} \} \cup \{[x\ y]^\text{T}\ |\ x\in\{0,L_x\};y\in\{0,1,L_y-1,L_y\} \}  $. \label{fact:corners}
\end{property}
\subsubsection{The Uniform Rectangular Array}
(URA) is a planar array, whose elements cover the rectangle $ \mathcal{D}\!=\!\{0\!:\!L_x\}\!\times\!\{0\!:\!L_y\} $. The URA has a simple and periodic, but highly redundant structure.
\subsubsection{The Minimum-Redundancy Array}
(MRA) minimizes the number of elements subject to a contiguous co-array \cite{moffet1968minimumredundancy,hoctor1996arrayredundancy}. MRAs with a rectangular sum co-array solve: $ \text{minimize}_\mathcal{D}\ |\mathcal{D}| \text{ s.t. } \mathcal{D}+\mathcal{D}\!=\!\{0\!:\!2L_x\}\!\times\!\{0\!:\!2L_y\}$. In general, this is a difficult non-convex problem with several feasible solutions. For example, when $ L_x\!=\!L_y\!=\!L=12 $ the MRA has $N\!=\!52 $ elements, which yields $ \mathcal{O}(10^{44})$ possible configurations of which $ 27108 $ are feasible solutions \cite{kohonen2017planaradditive}. Nevertheless, planar MRAs have been found for $ L_x,L_y\!\leq\!13 $ in \cite{kohonen2017planaradditive} using a combinatorial algorithm \cite{kohonen2014meet}, which combines component solutions found using a \emph{branch-and-bound} algorithm with dynamic pruning \cite{challis1993two}. Moreover, the solution non-uniqueness issue may be overcome by regularization. In this paper, sparseness $ S(1) $ is minimized among feasible MRAs, yielding the MRA with the fewest unit element displacements. 

\subsubsection{The Boundary Array} \label{sec:BA}
(BA) \cite{kozick1991linearimaging,kozick1993synthetic} consists of a hollow perimeter of elements. Actually, a BA of any convex shape has a contiguous sum and difference co-array \cite{kozick1991linearimaging}. The BA has empirically been found to be an MRA for square arrays with $ L\!\leq\!23 $ \cite{kohonen2017planaradditive}. However, the BA still has many closely spaced elements, due to the uniform linear arrays on its boundaries.

\section{Concentric rectangular array}\label{sec:cra}
\begin{definition}
	For even $ L_x,L_y \geq 2 $, the elements of the \emph{Concentric Rectangular Array} (CRA) are given by:\vspace{-0.25cm}
	\begin{align}
	\mathcal{D}_\text{CRA} =\bigcup_{i=0}^2& \{[d_x,d_y]^\text{T}\ |\ d_x\in \mathcal{D}_i(L_x); d_y \in\{i,L_y-i\}\}\nonumber\\ \vspace{-0.5cm}
	&\cup \{[d_x,d_y]^\text{T}\ |\ d_x\in \{i,L_x-i\}; d_y \in\mathcal{D}_i(L_y)\},\nonumber
	\end{align}
	where $\mathcal{D}_0(L)\!=\!\{0,L\}\!\cup\!\{1:2:L-1\},\ \mathcal{D}_1(L)\!=\!\{0,1,L-1,L\},\ \mathcal{D}_2(L)\!=\!
\{2:2:L-2\}$.
	\end{definition}
The CRA may also be constructed for odd $ L_x,L_y $ with the minor modifications illustrated in \figurename~\ref{fig:cra_concept}. Essentially, the CRA consists of two sparse interleaved coaxial rectangles displaced by two unit spacings. This structure ensures that the array only has a few closely spaced elements, and that the sum co-array is contiguous, as shown by the following theorem:
\begin{theorem}[Contiguous sum co-array]\label{theorem:cra_sum}
	$ \mathcal{C}_{\Sigma,\text{CRA}}=\mathcal{D}_\text{CRA}+\mathcal{D}_\text{CRA} = \{0:2L_x\}\times\{0:2L_y\} $.
\end{theorem}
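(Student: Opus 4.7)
The plan is a direct covering argument, stratified by the sum-row $s_y$. The inclusion $\mathcal{C}_{\Sigma,\text{CRA}} \subseteq \{0:2L_x\}\times\{0:2L_y\}$ is automatic from $\mathcal{D}_\text{CRA} \subseteq \{0:L_x\}\times\{0:L_y\}$, so the task is to show every target point $(s_x, s_y)$ arises as a pairwise sum of CRA elements. Setting $\mathcal{R}_j := \{d_x : [d_x, j]^\text{T} \in \mathcal{D}_\text{CRA}\}$, the problem reduces to verifying $\bigcup_{j_1 + j_2 = s_y}(\mathcal{R}_{j_1} + \mathcal{R}_{j_2}) = \{0:2L_x\}$ for each $s_y \in \{0:2L_y\}$. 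Since the CRA is centrally symmetric about $(L_x/2, L_y/2)$, it suffices to check $s_y \in \{0, 1, \ldots, L_y\}$; the remaining cases follow by mirror reflection (or directly from Lemma~\ref{lemma:symmetry}).

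The first step is a short one-dimensional sub-lemma: $\mathcal{D}_0(L)+\mathcal{D}_0(L) = \{0:2L\}$ for every even $L$. Splitting $\mathcal{D}_0(L)$ into its even part $\{0, L\}$ and odd part $\{1:2:L-1\}$, the three parity combinations (even+even, even+odd, odd+odd) yield $\{0, L, 2L\}$, $\{1:2:2L-1\}$ and $\{2:2:2L-2\}$ respectively, whose union is $\{0:2L\}$. This immediately disposes of the rows $s_y \in \{0, L_y, 2L_y\}$ via the pairs $(0,0)$, $(0, L_y)$ and $(L_y, L_y)$.

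The second step is to read the row slices off the definition: $\mathcal{R}_0 = \mathcal{D}_0(L_x)$, $\mathcal{R}_1 = \mathcal{D}_1(L_x) = \{0, 1, L_x-1, L_x\}$, $\mathcal{R}_2 = \mathcal{D}_2(L_x) = \{2:2:L_x-2\}$; for every interior odd $j$ with $3 \le j \le L_y-3$ one has $\mathcal{R}_j = \{0, L_x\}$, and for every interior even $j$ with $4 \le j \le L_y-4$ one has $\mathcal{R}_j = \{2, L_x-2\}$; finally $\mathcal{R}_{L_y-2}, \mathcal{R}_{L_y-1}, \mathcal{R}_{L_y}$ mirror $\mathcal{R}_2, \mathcal{R}_1, \mathcal{R}_0$. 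For each remaining $s_y \in \{1, \ldots, L_y-1\}$ I then enumerate the admissible pairs $(j_1, j_2)$ with $j_1+j_2 = s_y$ and verify that $\bigcup(\mathcal{R}_{j_1}+\mathcal{R}_{j_2})=\{0:2L_x\}$. The recurring pattern is that the pair $(0, s_y)$ or $(1, s_y-1)$ supplies every odd $x$ together with the outer corners $\{0, L_x, 2L_x\}$, the pair $(2, s_y-2)$ supplies the interior even values $\{2:2:L_x-2\}\cup\{L_x+2:2:2L_x-2\}$, and mixed pairings such as $\{0, L_x\}+\{2, L_x-2\}$ or $\mathcal{D}_1(L_x)+\mathcal{D}_1(L_x)$ fill residual values near $x \in \{0, 1, L_x-1, L_x, L_x+1, 2L_x-1, 2L_x\}$.

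I expect the main obstacle to be the bookkeeping for even $s_y \ge 4$: most contributing row pairs there reduce to the sparse two-element sets $\{0, L_x\}$ and $\{2, L_x-2\}$, so one must simultaneously invoke $\mathcal{R}_2+\mathcal{R}_2 = \{4:2:2L_x-4\}$, $\mathcal{R}_1+\mathcal{R}_{s_y-1}$, and $\mathcal{R}_0+\mathcal{R}_{s_y}$ (the last producing the comb $\{3:2:2L_x-3\}$ together with $\{2, L_x\pm 2, 2L_x-2\}$) to confirm that no integer in $\{0:2L_x\}$ is missed. Small values $L_y \in \{2, 4, 6\}$, in which the three dense boundary rows collide with their mirror counterparts, require a brief separate check; once these degenerate cases are handled, grouping the remaining cases by the parity of $s_y$ completes the proof.
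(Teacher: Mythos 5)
Your proposal is correct and follows essentially the same route as the paper's proof: stratify the sum co-array by row, observe that only the three dense boundary rows $\mathcal{D}_0,\mathcal{D}_1,\mathcal{D}_2$ (and their mirrors) matter, split the remaining rows by parity, and invoke symmetry for $s_y>L_y$ — your $\mathcal{R}_0+\mathcal{R}_{s_y}$, $\mathcal{R}_1+\mathcal{R}_{s_y-1}$, $\mathcal{R}_2+\mathcal{R}_{s_y-2}$ unions are exactly the decomposition $(\mathcal{D}_0+\mathcal{D}_l)\cup(\mathcal{D}_1+\mathcal{D}_{l-1})\cup(\mathcal{D}_2+\mathcal{D}_{l-2})$ used in the paper. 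The only differences are cosmetic: you keep $L_x\neq L_y$ throughout where the paper sets $L_x=L_y$ without loss of generality, and you explicitly flag the degenerate small-$L_y$ cases that the paper passes over silently.
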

\begin{proof}
The theorem is proved for even $ L_x,L_y $. The proof for odd dimensions is similar and hence omitted. Assume $ L_x =L_y=L$ without loss of generality. Let $ \mathcal{D}_l(L)=\mathcal{D}_l $ denote the set of one-dimensional element coordinates on row $ l$. When $ l\leq L $, the $ l^\text{th} $ row of the co-array is $ 
\mathcal{C}_l=\bigcup_{i=0}^{\lfloor l/2 \rfloor} \mathcal{D}_i+\mathcal{D}_{l-i}$, which in case of the CRA simplifies to $ \mathcal{C}_l =(\mathcal{D}_0+\mathcal{D}_l)\cup(\mathcal{D}_1+\mathcal{D}_{l-1})\cup(\mathcal{D}_2+\mathcal{D}_{l-2})$. For $ l=0 $ and $l=1 $, $ \mathcal{C}_0 = \mathcal{D}_0+\mathcal{D}_0 \supseteq \mathcal{C}_1= \mathcal{D}_0+\mathcal{D}_1 = \{0,1,3,\dots,L-1,L\}+\{0,1,L-1,L\} = \{0:2L\}$. Similarly, $ \mathcal{C}_2 = (\mathcal{D}_0+\mathcal{D}_2)\cup (\mathcal{D}_1+\mathcal{D}_1) \supseteq (\{2:L-1\} \cup \{L+1:2L-2\}) \cup \{0,1,L,2L-1,2L\}  = \{0:2L\}$. When $ l \geq 3$ and odd, $ \mathcal{C}_l \supseteq (\mathcal{D}_0+\mathcal{D}_l) \cup (\mathcal{D}_2+\mathcal{D}_{l-2}) \supseteq (\{0,2L\}\cup\{1:2:2L-1\})\cup\{2:2:2L-2\} = \{0:2L\}$. When $ l\geq 4 $ and even, $ \mathcal{C}_l \supseteq (\mathcal{D}_0+\mathcal{D}_l) \cup(\mathcal{D}_1+\mathcal{D}_{l-1})\cup (\mathcal{D}_2+\mathcal{D}_{l-2}) \supseteq(\{3:2:2L-3\}\cup\{2,2L-2\})\cup\{4:2:2L-4\}\cup\{0,1,2L-1,2L\} = \{0:2L\}$. Consequently, $ \mathcal{C}_\Sigma \supseteq \{0:2L\}\times\{0:L\} $. Due to symmetry, $ \mathcal{C}_l=\{0:2L\} $ when $ L+1\leq l\leq 2L$, yielding $ \mathcal{C}_\Sigma = \{0:2L\}\times\{0:2L\} $.
\end{proof}
Similarly to the BA and URA, the CRA also has a contiguous difference co-array, as stated in the following corollary:
\begin{corollary}[Contiguous difference co-array]
	$\mathcal{C}_{\Delta,\text{CRA}}= \mathcal{D}_\text{CRA}-\mathcal{D}_\text{CRA} = \{-L_x:L_x\}\times\{-L_y:L_y\} $.
\end{corollary}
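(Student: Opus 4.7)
The plan is to derive the corollary by combining Theorem~\ref{theorem:cra_sum} with Lemma~\ref{lemma:symmetry}. Since Theorem~\ref{theorem:cra_sum} already pins down $\mathcal{C}_\Sigma$, the only remaining work is (i) to verify that $\mathcal{D}_\text{CRA}$ is mirror symmetric, and (ii) to convert the statement of Lemma~\ref{lemma:symmetry} from equality of multiplicity functions into the required equality of supports in two dimensions.

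First I would check that $\mathcal{D}_\text{CRA}$ is point symmetric about its geometric center $\mathbf{c}=(L_x/2,L_y/2)$, i.e., $\mathcal{D}_\text{CRA}=2\mathbf{c}-\mathcal{D}_\text{CRA}$. This is a direct inspection of the definition: each of $\mathcal{D}_0(L)=\{0,L\}\cup\{1\!:\!2\!:\!L-1\}$, $\mathcal{D}_1(L)=\{0,1,L-1,L\}$, and $\mathcal{D}_2(L)=\{2\!:\!2\!:\!L-2\}$ is invariant under $k\mapsto L-k$, and the outer construction placing these sets on rows $d_y\in\{i,L_y-i\}$ and columns $d_x\in\{i,L_x-i\}$ is likewise invariant under both $d_x\mapsto L_x-d_x$ and $d_y\mapsto L_y-d_y$. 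Hence $\mathcal{D}_\text{CRA}$ is invariant under reflection across each median axis and, in particular, point symmetric about $\mathbf{c}$.

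Second, I would invoke a 2D version of Lemma~\ref{lemma:symmetry}. The cleanest way to get this is to observe that, for any point-symmetric $\mathcal{D}$, the map $\mathbf{d}\mapsto 2\mathbf{c}-\mathbf{d}$ is a bijection on $\mathcal{D}$, so substituting it into one of the two arguments of $\upsilon_\Delta$ yields $\upsilon_\Delta(\mathbf{d})=\upsilon_\Sigma(\mathbf{d}+2\mathbf{c})$ for every $\mathbf{d}\in\mathbb{Z}^2$. In particular, the supports are related by $\mathcal{C}_\Delta=\mathcal{C}_\Sigma-2\mathbf{c}=\mathcal{C}_\Sigma-(L_x,L_y)$, and substituting $\mathcal{C}_{\Sigma,\text{CRA}}=\{0\!:\!2L_x\}\times\{0\!:\!2L_y\}$ from Theorem~\ref{theorem:cra_sum} gives exactly $\{-L_x\!:\!L_x\}\times\{-L_y\!:\!L_y\}$. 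The odd $L_x,L_y$ case would be handled analogously, after checking that the small definitional modifications alluded to in \figurename~\ref{fig:cra_concept} preserve point symmetry.

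The main obstacle is quite minor: Lemma~\ref{lemma:symmetry} is stated only in 1D and the planar generalization is merely asserted. Making the 2D extension precise — whether via the bijection argument sketched above or by a separable convolution/autocorrelation computation along each axis — is the one small gap that the proof needs to bridge before the corollary drops out of Theorem~\ref{theorem:cra_sum} as a translate of its support.
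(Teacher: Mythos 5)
Your proposal is correct and follows exactly the paper's route: the paper's entire proof is ``This follows from Lemma~\ref{lemma:symmetry} and Theorem~\ref{theorem:cra_sum},'' and you simply make explicit the two details it leaves implicit (the symmetry check on $\mathcal{D}_\text{CRA}$ and the 2D extension of the lemma via the inversion bijection $\mathbf{d}\mapsto 2\mathbf{c}-\mathbf{d}$). Your observation that the relevant hypothesis in 2D is point symmetry about the center, which the CRA's invariance under both axis reflections guarantees, is a sound and worthwhile clarification of the lemma's ``mirror symmetric'' wording.
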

\begin{proof}
This follows from Lemma~\ref{lemma:symmetry} and Theorem~\ref{theorem:cra_sum}.
\end{proof}
\begin{figure}[]
	\centering
	\subfigure{\includegraphics[width=1.2in]{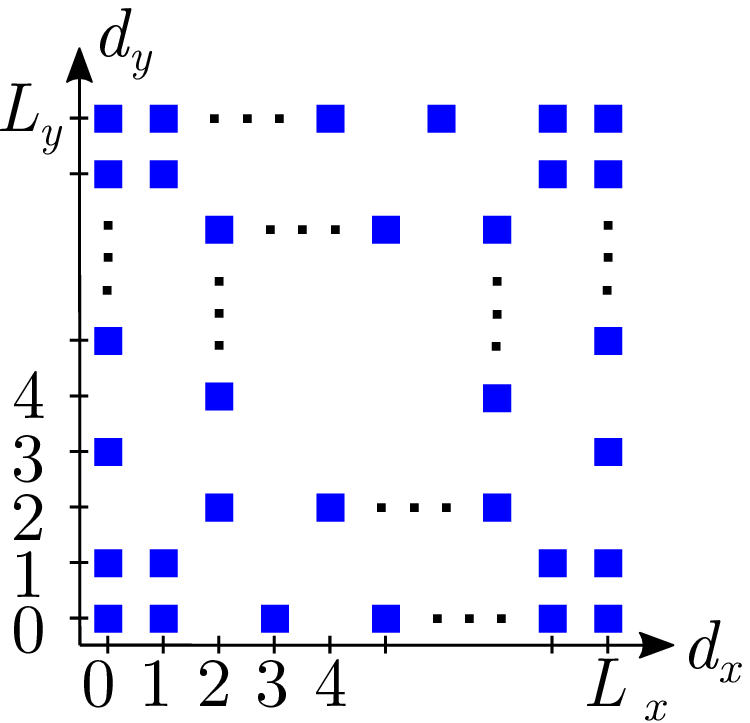} } \hspace{1cm}
	\subfigure{\includegraphics[width=1.2in]{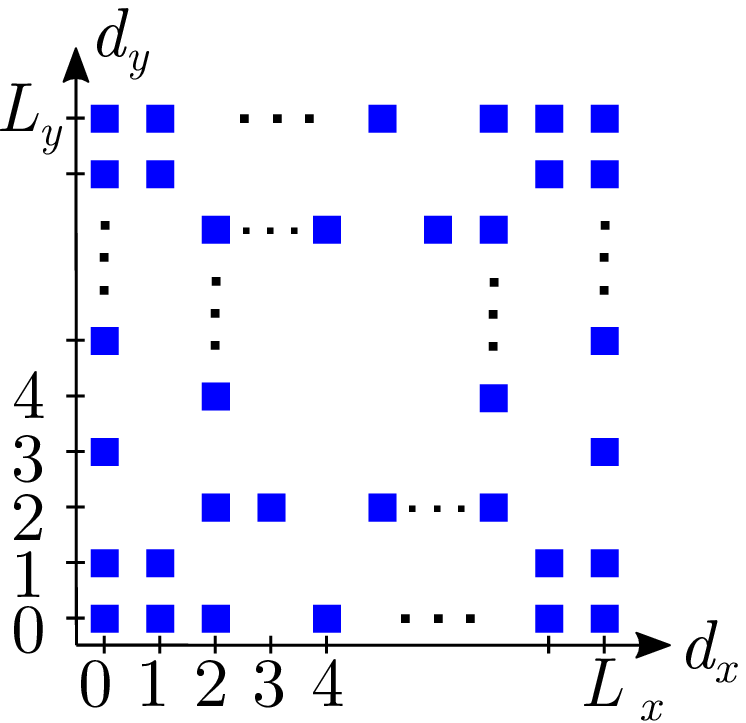}}
	\caption{CRA for even $ L_x,L_y $ (left), and odd $ L_x $, even $ L_y $ (right).}
	\label{fig:cra_concept}
\end{figure}

\makeatletter
\renewcommand{\@IEEEsectpunct}{:\ \,}
\makeatother

\subsection{Comparison with other array configurations}
Next, the redundancy and sparseness of the CRA is compared to the BA, MRA and URA. Results are summarized in Table~\ref{tab:summary}. Note that only configurations with a contiguous sum co-array are considered. E.g., the Hourglass Array \cite{liu2017hourglass} is sparser than the CRA, but its sum co-array has holes.

\begin{table*}[!ht]
\centering
	\renewcommand{\arraystretch}{1.24}
	\caption{Properties of array configurations as a function of dimensions $ L_x \times L_y $ and aspect ratio $ \rho = (L_y+1)/(L_y+1)$.}\label{tab:summary}
	\resizebox{0.90\linewidth}{!}{
		\begin{tabular}{|c|c|c|c|c|c|}
			\hline
			Array configuration  & No. of elements, $ N $&Sparseness, $S(1)$ &$ S(\sqrt{2}) $ &$ S(2) $& Asymptotic redundancy, $ R_\infty$\\
			\hline
			Uniform Rectangular Array (URA) &$ L_x L_y $ & $ 2L_xL_y+L_x+L_y $ & $ 2L_xL_y $ & $ 2L_xL_y-L_x-L_y $ &  $ \infty $\\
			Minimum-Redundancy Array (MRA) &N/A& $ \geq 8 $ & $ \geq 2 $ & $ \geq 1 $  & $ 1.19-2 $\\
			Boundary Array (BA) \cite{kozick1991linearimaging} &$ 2(L_x+L_y) $ & $ 2(L_x+L_y) $ & $ 4 $ & $ 2(L_x+L_y)-4 $  &$ (\rho +1)^2/(2\rho) $ \\
			Concentric Rectangular Array (CRA) & $ 2(L_x+L_y) $& $ 16$ & $ 12 $ & $  2(L_x+L_y)-12$ & $ (\rho +1)^2/(2\rho) $\\
			\hline
		\end{tabular}
	}
\end{table*}

\subsubsection{Number of elements and redundancy}
For even $ L_x,L_y\!\geq\!6$, the CRA has the same number of elements as the BA, i.e., $ N\!= \!2(L_x+L_y)$. Recall that the BA is an MRA when $L_x\!=\!L_y\!=\!L\!\leq\!23$. \figurename~\ref{fig:LvR} shows the redundancy of the three configurations for $ L\!\in\!\{0\!:\!20\} $. The redundancy of the CRA equals that of the BA and MRA for even $ L\!\geq\!6 $. For odd $ L $, the CRA has two more elements than the BA. It is straightforward to show that all elements of the CRA are \emph{essential} \cite{liu2017maximally}, i.e., removing a sensor introduces a hole in the sum co-array (but not necessarily the difference co-array). When $ L\!=\!\{6\!:\!2\!:\!23\} $, this follows directly from the MRA property of the CRA. 
\begin{figure}[!h]
	\centering
	\includegraphics[width=3.0in]{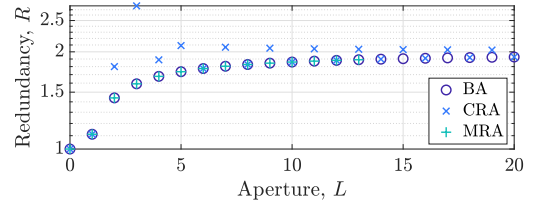}
	\caption{Redundancy of sparse square arrays. The CRA and BA are equally redundant for even apertures $ L\geq 6 $. These arrays are also MRAs when $ L\leq 23 $ \cite{kohonen2017planaradditive}. MRAs for $ L\geq 14 $ have not yet been exhaustively listed \cite{kohonen2017planaradditive}.}
	\label{fig:LvR}
\end{figure}

\subsubsection{Sparseness}
\figurename~\ref{fig:cra_concept} shows that the number of smallest inter-element distances in the CRA is practically independent of the array dimensions. Specifically, $ 16\!\leq\!S(1)\!\leq\!20$ and $12\!\leq\!S(\sqrt{2})\!\leq\!16 $, when $ L_x,L_y\!\geq\!4 $. This is a significant improvement over the BA with $ S(1)\!=\!\mathcal{O}(L_x+L_y) $, and the URA with $ S(1)\!=\!\mathcal{O}(L_xL_y)$. In fact, Property~\ref{fact:corners} guarantees that the CRA has at most $ 8 $ more unit spacings than any sum co-array equivalent array, including the MRA. \figurename~\ref{fig:array} shows that although the CRA has two extra elements compared to the MRA when $ L\!=\!13$, the CRA still has lower $ S(1) $ (\figurename~\ref{fig:LvU}).  Moreover, an exhaustive search of MRAs \cite{kohonen2017planaradditive} reveals that the CRA is the MRA with the lowest $ S(1) $ when $ L\!\in\!\{6\!:\!2\!:\!12\} $. 

\begin{figure}[h]
	\centering
		\subfigure{\includegraphics[width=1.65in]{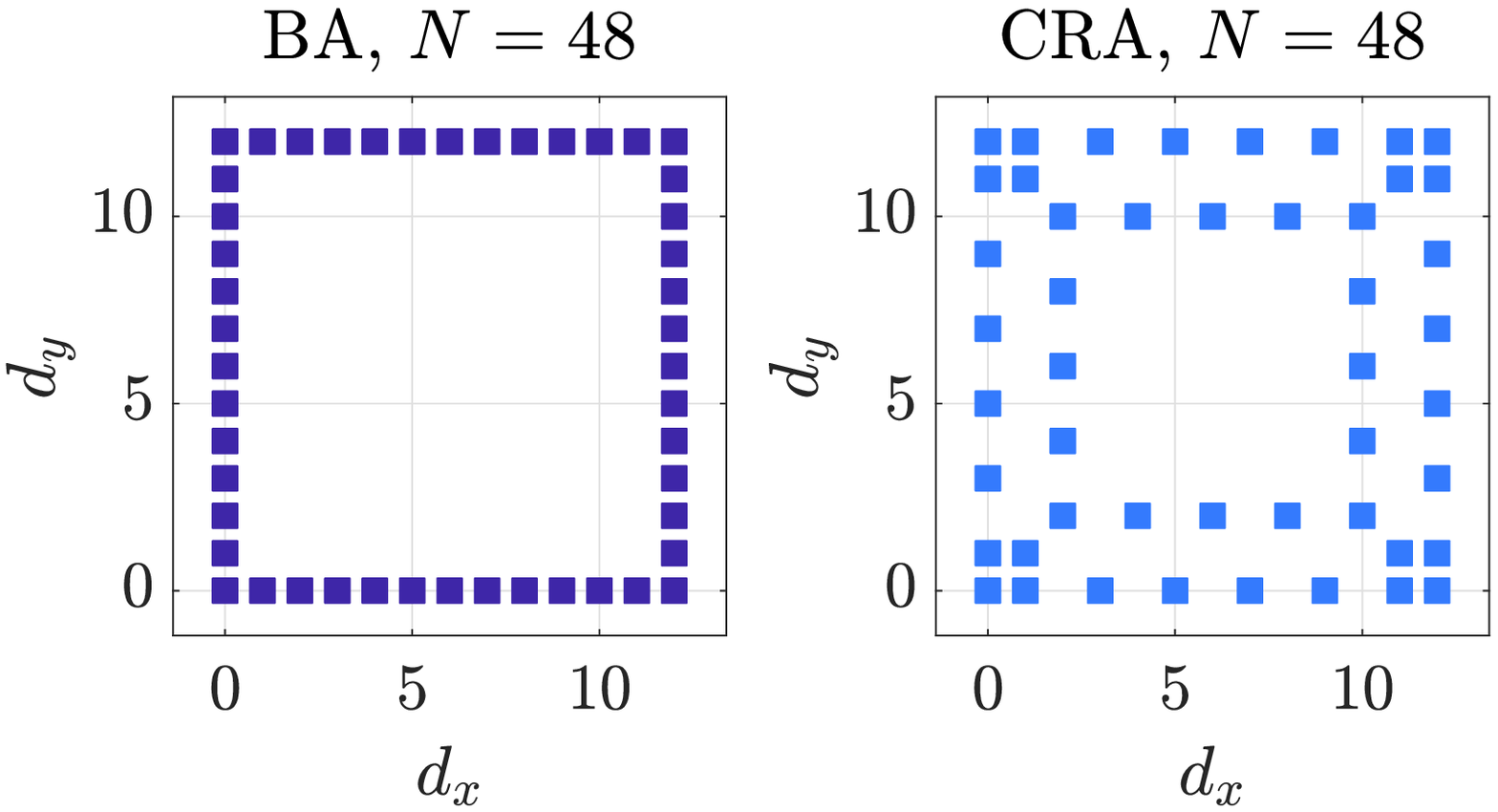} \label{fig:array_even.eps}}
		\subfigure{\includegraphics[width=1.65in]{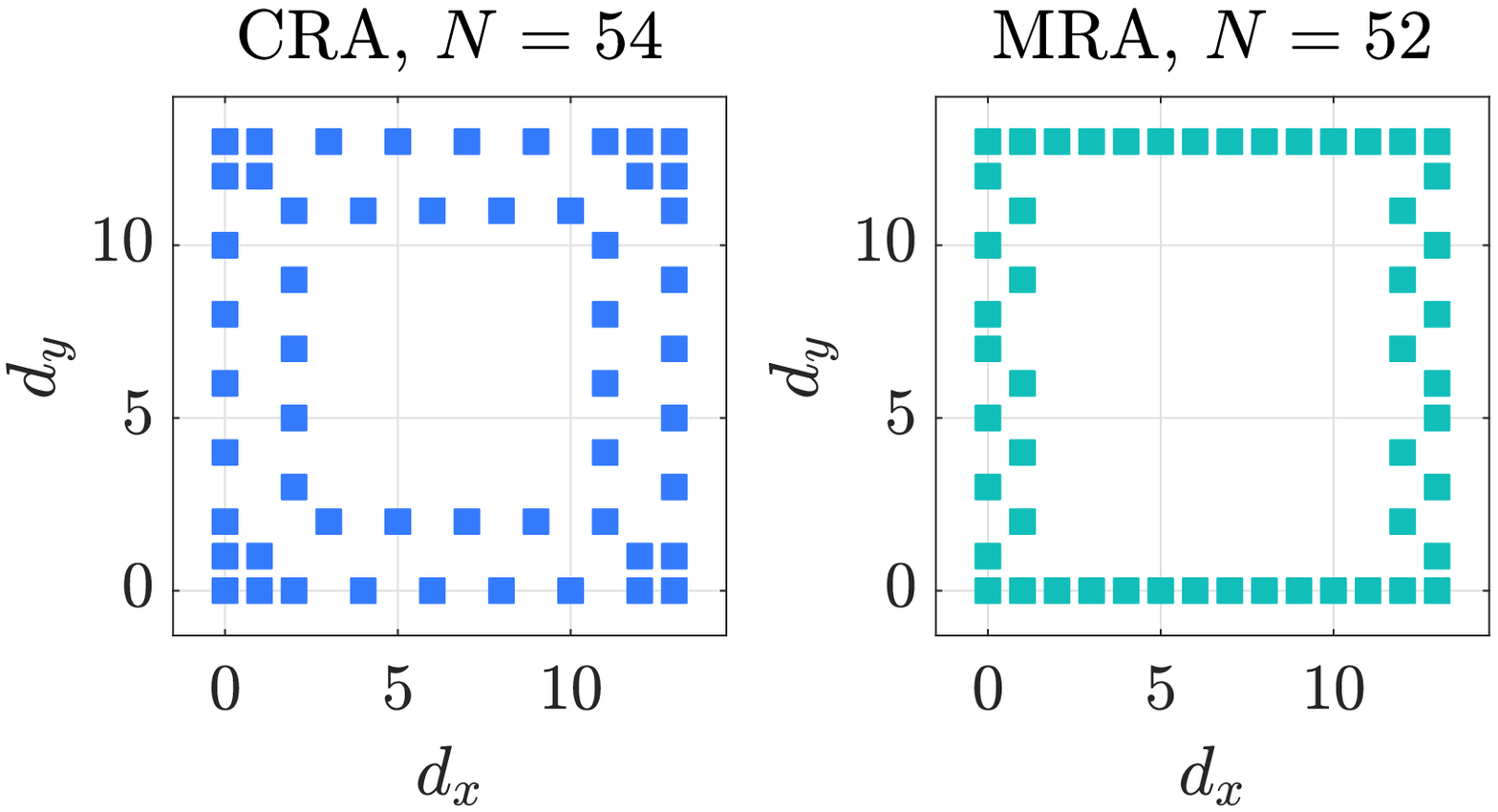} \label{fig:array_odd.eps}}
	\caption{Sparse square arrays. Left: BA and CRA for even aperture $ L\!=\!12 $. The CRA is numerically found to be the MRA with fewest unit spacings for $L\!\in\!\{6,8,10,12\}$. Right: CRA and MRA for odd $ L\!=\!13 $. The CRA has~fewer closely spaced elements, despite having two more elements than the MRA.}
	\label{fig:array}
\end{figure}

\begin{figure}[h]
	\centering
	\includegraphics[width=3.0in]{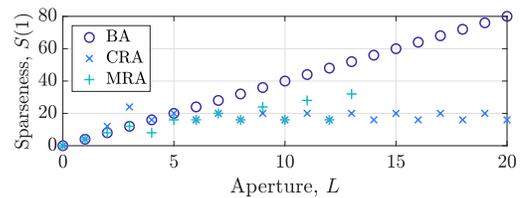}
	\caption{Number of unit spacings in square arrays. The CRA has more elements than the MRA for $ L\in\{9,11,13\} $, but fewer of them are closely spaced.}
	\label{fig:LvU}
\end{figure}

\subsubsection{Asymptotic redundancy}
The array aspect ratio is defined $\rho\!=\!(L_y+1)/(L_x+1)\!\in\!(0,1]$, where $ L_x\!\geq\!L_y $ is assumed without loss of generality. After simple manipulations, the asymptotic redundancy (section~\ref{sec:R}) of the CRA and BA becomes: $ R_\infty\!=\!{(\rho+1)^2}/{(2\rho)} $. This expression achieves its minimum value $ R_\infty\!=\!2 $ at $ \rho\!=\!1 $, which corresponds to the square array. For other aspect ratios $ R_\infty\!>\!2 $. Recently, two configurations achieving a constant asymptotic redundancy for any aspect ratio were introduced: the \emph{Dense-Sparse} and \emph{Short Bars Array} \cite{kohonen2017planaradditive}. These configurations are easily modified to have contiguous sum co-arrays and $ R_\infty\!=\!2 $ for any $ \rho $. It is therefore interesting to compare the CRA to these solutions when $ \rho\!\neq\!1 $. An appropriate quantity for this is the asymptotic ratio of the number of elements in the CRA to the two aforementioned arrays. The \emph{element redundancy} $ \eta_\infty\!=\!\sqrt{R_\infty/2}\!=\!(\rho+1)/(2\sqrt{\rho})$ shows that the CRA is only slightly inefficient for moderate $ \rho $. E.g. $ \rho\!=\!0.5 $ yields $ \eta_\infty\!\approx\!1.06 $, which implies that large CRAs require at most $ 6\% $ more elements than the two aspect-ratio-independent configurations, when $ 0.5\!\leq\!\rho\!\leq\!1 $. Non-square arrays are relevant in many practical applications, such as radar and wireless communications that may require different resolutions in azimuth and elevation.



\section{Imaging example with mutual coupling} \label{sec:example}
This section compares the imaging performance of the URA, BA and CRA in the presence of mutual coupling, which is not compensated for in any manner. An array aperture of $12\!\times\!12$ unit spacings is chosen, leading to $ N\!=\!48$ elements in the BA and CRA (\figurename~\ref{fig:array}), and $ N\!=\!169 $ in the URA. Elements are assumed identical, reciprocal, and omnidirectional. Entries of the coupling matrix are given by $ M_{nm}\!=\!c_1 e^{j\phi l_{nm}}/l_{nm} $, where $ l_{nm}\!=\!\|\mathbf{d}_n\!-\!\mathbf{d}_m\|_2/(\lambda/2) $ is the Euclidean distance between elements $ n $ and $ m $ in units of the inter-element spacing $ \lambda/2 $.

A scene consisting of $K\!=\!16 $ far-field, unit reflectivity ($ \gamma_{k}\!=\!1, \forall k$) point targets is imaged. Target azimuth and elevation are given by $\varphi_k\!\in\!\{\pm 3,\pm 1\}\!\cdot\!\pi/10 $ and $\theta_k\!\in\!\{1,2,3,4\}\!\cdot\!\pi/5 $. The coupling magnitude is $ c_1\!=\!0.2 $, and the coupling phase is a uniformly distributed random variable, $ \phi\!\sim\!\text{Uni}(0,2\pi) $. The image RMSE is $\varepsilon\!=\!\|\mathbf{Y}_\text{d} - \alpha \mathbf{Y}\|_\text{F}\!/\!\sqrt{N_\varphi N_\theta}$, where $\mathbf{Y},\mathbf{Y}_\text{d}\in \mathbb{C}^{N_\varphi \times N_\theta} $ are the beamformed images with and without mutual coupling, and $N_\varphi\!=\!N_\theta\!=\!201$ are the number of scanned azimuth and elevation angles. Furthermore, $ \alpha\!=\!\text{Tr}(\mathbf{Y}^\text{H}\mathbf{Y}_\text{d})/\|\mathbf{Y}\|_\text{F}^2$ is the scaling factor minimizing $ \varepsilon $. A pixel of the image assumes the form $ Y_{ij}\!=\!\sum_{q=1}^{Q} \mathbf{w}_\text{r}^{(q)\text{T}} \mathbf{X} \mathbf{w}_\text{t}^{(q)}$. Here $ Q $ is number of component images, which is chosen such that $ 99.99\% $ of the variation in the desired PSF (under the coupling-free model) is captured by the set of Tx and Rx weights $ \{\mathbf{w}_\text{t}^{(q)},\mathbf{w}^{(q)}_\text{r}\}_{q=1}^Q $ \cite{rajamaki2017sparsearrayimaging}. These weights are computed using SVD image addition \cite{kozick1991linearimaging} with low-rank matrix recovery \cite{rajamaki2017sparsearrayimaging}. The desired co-array weighting is set to a 2D Dolph-Chebyshev window \cite{dolph1946acurrent} with $ -40 $ dB sidelobes.

\figurename~\ref{fig:images} shows beamformed images $ |\mathbf{Y}_\text{d}| $ (URA, coupling-free) and $ |\alpha \mathbf{Y}| $ (URA, BA, and CRA; coupling phase $ \phi\!=\!1.15\pi $). The sample mean $ \hat{\mu} $ and standard deviation $ \hat{\sigma} $ of the RMSE are also displayed ($ 100 $ realizations of $ \phi $). The CRA achieves $ \hat{\mu}\!\pm\!\hat{\sigma}\!\approx\!(6\pm 0.8)\!\cdot\!10^{-2}$, which is approximately $ 25\% $ lower than the BA with $ (8\!\pm\!1)\!\cdot\!10^{-2} $. However, the visual difference between the two images is negligible. The URA has the lowest side lobes, but the highest RMSE $ (13\!\pm\!3)\!\cdot\!10^{-2}$, since targets appear weaker closer to boresight than endfire.
\begin{figure}[!h]
	\centering
	\subfigure{\includegraphics[width=1.65in]{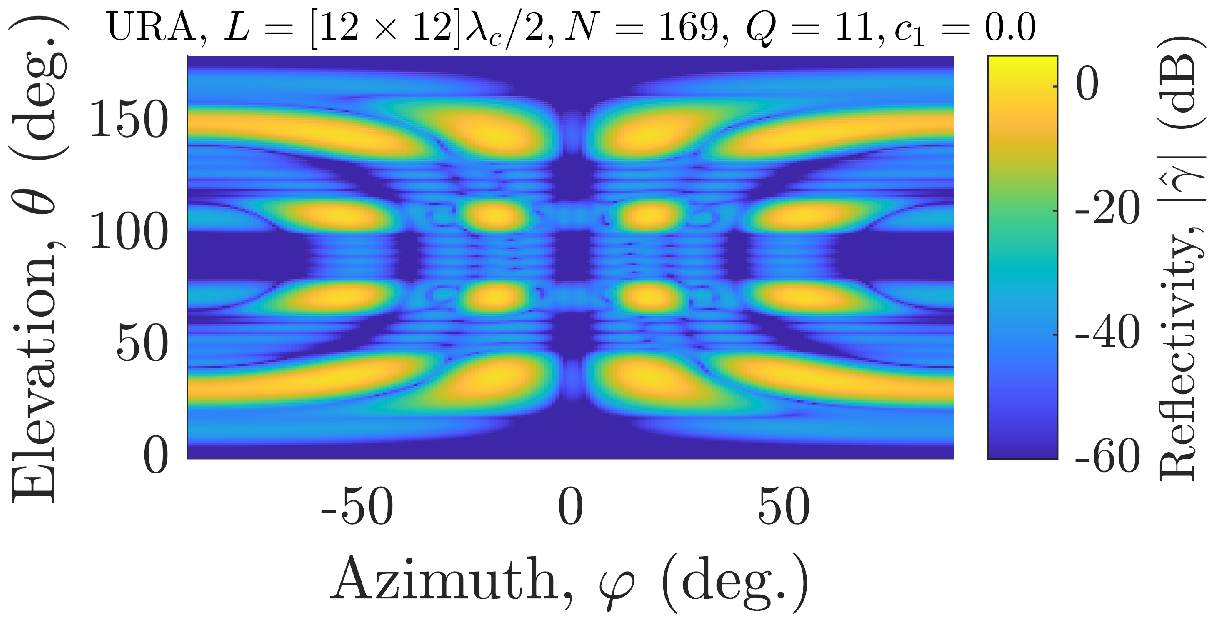}}
	\subfigure{\includegraphics[width=1.65in]{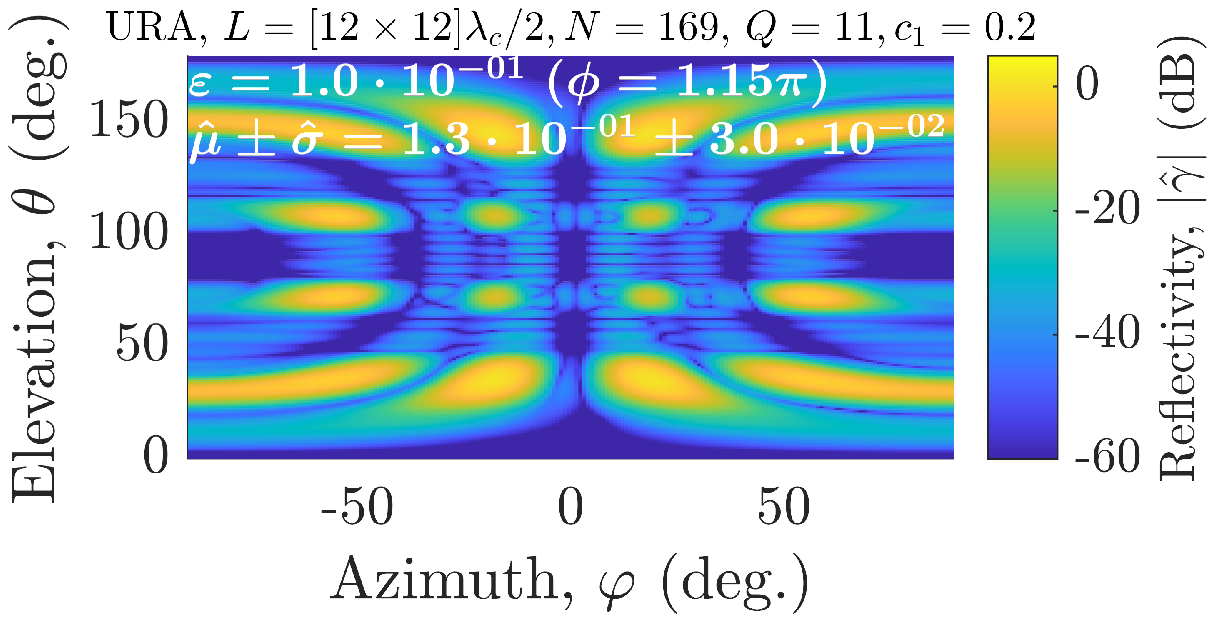}}
	\subfigure{\includegraphics[width=1.65in]{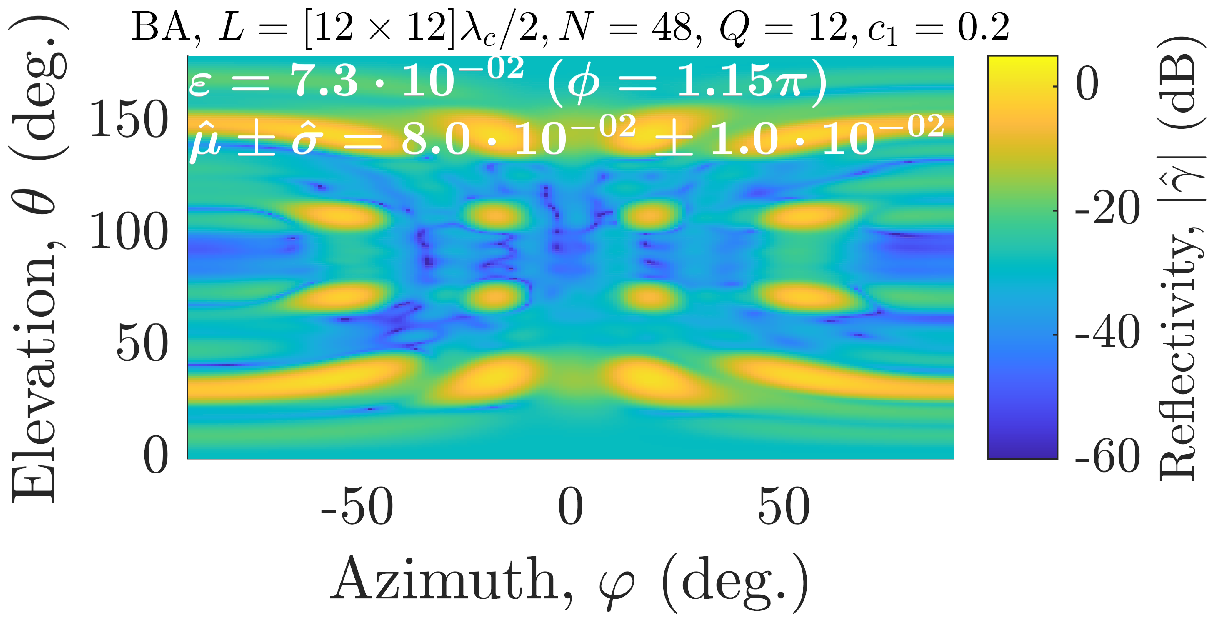}}
	\subfigure{\includegraphics[width=1.65in]{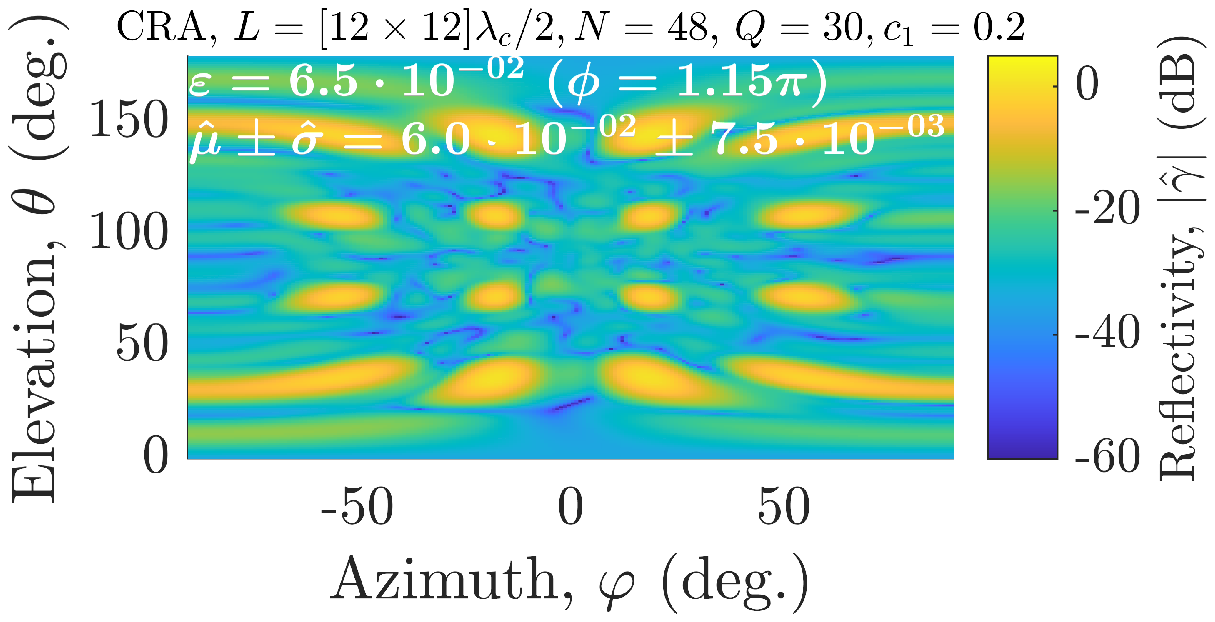}}	
	\caption{With image addition and no mutual coupling, the same image (top left) is achieved by all three arrays. With mutual coupling ($ c_1=0.2, \phi =1.15\pi $), the CRA (bottom right) achieves lower RMSE $ \varepsilon $ than the URA (top right) and BA (bottom left), in addition to lower average RMSE $  \hat{\mu} \pm \hat{\sigma} $ ($ 100 $ random $ \phi $).}
	\label{fig:images}
\end{figure}

In reality, sparseness alone does not explain the extent of mutual coupling experienced by an array. In fact, the CRA does not always outperform the BA and URA, even in simulations using the simple signal model of \eqref{eq:x}. Results depend on, e.g., the target scene $ \boldsymbol{\Gamma}$, coupling parameters $ (c_1,\phi) $, and scan range of ($ \varphi,\theta $). Additionally, the periodic structure of the BA (and obviously the URA) could be leveraged in simplifying mutual coupling compensation or performance analysis \cite{rubio2015mutual,pozar1994active}. Periodicity also enables the URA to avoid scan blindness when elements are spaced closer than $ \lambda/2 $ \cite{allen1966mutual}. Therefore, a more thorough comparison of sparse active arrays in the presence of mutual coupling is needed. This is however beyond of the scope of this signal processing letter and left for future~work.


\section{Conclusions} \label{sec:conclusions}
This paper introduced the Concentric Rectangular Array (CRA) and established its key properties. The CRA is a~sparse active array with transceiving elements confined to a rectangular area. The array has a contiguous sum and difference co-array, and the same number of elements as the Boundary Array (BA). Both of these arrays are actually Minimum-Redundancy Arrays (MRAs) for certain square arrays. In some cases, the CRA is the MRA with the fewest unit spacings. The CRA and BA are generally not MRAs for non-square apertures. However, large CRAs with aspect ratio $\rho\!\geq\!0.5 $ require $ \leq\!6\% $ more elements than currently known aspect-ratio-independent configurations. Furthermore, the number of unit element displacements in the CRA remains low and practically independent of aperture, which in some cases makes it less susceptible to the effects of mutual coupling.

\bibliographystyle{IEEEtran}
\bibliography{IEEEabrv,bibliography}

\end{document}